\newcommand{\bbE}{\mathbb{E}}
\newcommand{\bbP}{\mathbb{P}}
\newcommand{\bbN}{\mathbb{N}}
\newcommand{\bbR}{\mathbb{R}}
\newenvironment{proof*}[1]
  {%
   \begin{proof}}
  {\end{proof}}
\title{Online Matching with High Probability}
\author{Milena Mihail\thanks{The author was supported in part by a UCI-faculty startup grant.} \and Thorben Tr\"obst\thanks{The author was supported in part by NSF grant CCF-1815901. Part of this work was done while the author attended the ``Trimester Program on Discrete Optimization'' at the Hausdorff Institute for Mathematics in Bonn, Germany.}}
\institute{\texttt{\{mihail, t.troebst\}@uci.edu}\\ Department of Computer Science\\ University of California, Irvine, USA}
\begin{document}

\maketitle

\begin{abstract}
We study the classical, randomized \textsc{Ranking} algorithm which is known to be $(1 - \frac{1}{e})$-competitive in expectation for the Online Bipartite Matching Problem.
We give a tail inequality bound (Theorem~\ref{thm:obm_concentration}), namely that \textsc{Ranking} is $(1 - \frac{1}{e} - \alpha)$-competitive with probability at least $1 - e^{-2 \alpha^2 n}$ where $n$ is the size of the maximum matching in the instance.
Building on this, we show similar concentration results for the Fully Online Matching Problem and for the Online Vertex-Weighted Bipartite Matching Problem.

\keywords{Online Algorithms \and Concentration of Measure \and Online Matching \and Randomized Algorithms}
\end{abstract}

\section{Introduction}

In the Online Bipartite Matching Problem, we have an undirected, bipartite graph $G = (S, B, E)$ with a set $S$ of \emph{sellers} and a set $B$ of \emph{buyers}.
The buyers arrive online in adversarial order and every time a buyer $i$ arrives, all of its neighbors $N(i)$ are revealed.
We must then decide irrevocably which as of yet unmatched neighbor of $i$ should get matched to $i$.
The goal is to maximize the number of edges in the final matching $M$ relative to the size of the maximum matching in $G$, i.e.\ the so-called competitive ratio.

A matching $M$ is considered \emph{maximal} if there is no edge in $E$ which can be added to $M$ while preserving the matching property.
Due to the well-known fact that every maximal matching contains at least half of the edges of any \emph{maximum} matching, it is easy to see that any algorithm which matches arriving buyers whenever possible must be $\frac{1}{2}$-competitive.
Moreover, because of the adversarial arrival of buyers, this is best possible for deterministic algorithms.
In their seminal work, Karp et al.\ \cite{KarpVV/STOC/1990} defined the randomized \textsc{Ranking} algorithm (see Algorithm~\ref{alg:ranking}) and showed that it is $(1 - \frac{1}{e})$-competitive in expectation.
They also showed that this is the best possible competitive ratio for any randomized algorithm.

\medskip
\begin{algorithm2e}[H]
    Sample a uniformly random permutation $\pi$ on $S$. \\
    \For{each buyer $i$ who arrives}{
        Match $i$ to the first unmatched buyer in $N(i)$ wrt.\ to $\pi$.
    }
    \caption{\textsc{Ranking}\label{alg:ranking}}
\end{algorithm2e}
\medskip

In the years since then, online matching problems have received a large amount of interest due to the vast number of applications created by the internet and mobile computing.
Online advertising alone poses the AdWords Problem \cite{MehtaSVV/JACM/2007,HuangZZ/FOCS/2020,Vazirani/arXiv/2021} that lies at the heart of a multi-billion dollar market.
Another interesting application, the Fully Online Matching Problem \cite{HuangKTWZZ/STOC/2018,HuangPTTWZ/SODA/2019}, came about due to the rise of ride-sharing and/or ride-hailing apps such as Uber and Lyft.

For many online matching problems, there are extensions of \textsc{Ranking} which achieve competitive ratios of $1 - \frac{1}{e}$ or at the very least strictly greater than $\frac{1}{2}$.
Often, these are best-known for their respective problems.
However, to the best of our knowledge, all results on online matching in the literature only consider the competitive ratio \emph{in expectation} without guaranteeing any form of concentration beyond the trivial bounds that follow from Markov's inequality.

The analysis of concentration bounds for randomized algorithms goes back to the 1970s with classic results such as the second moment bound for \textsc{Quicksort} \cite{Sedgewick/Book/1975}.
See \cite{DubhashiP/Book/2009} for an extensive overview of the field.
However, it has remained a niche area of interest, including in the analysis of online algorithms where---with some exceptions (e.g. \cite{KommKKM/STACS/2014,LeonardiMPR/SIAMJoC/2001})---results are quantified in terms of expected solution quality only.

In some sense this is due to the well-known fact that, as a consequence of standard Chernoff bounds, any randomized algorithm which is good in expectation can be boosted to be good with probability $1 - \frac{1}{n}$ by simply repeating it $O(\log n)$ many times.
But it is precisely in the case of online algorithms where this argument fails due to the fact that online algorithms, by definition, can not be repeated.
Thus we argue that it is of particular interest to analyze the concentration of randomized online algorithms such as \textsc{Ranking}.

\subsection{Our Results and Techniques}

Our results concern \textsc{Ranking} type algorithms in three different settings: the classic Online Bipartite Matching Problem (see Section~\ref{sec:online_bipartite_matching}), the Fully Online Matching Problem inspired by ride-sharing (see Section~\ref{sec:fully_online_matching}) and the Online Vertex-Weighted Bipartite Matching Problem inspired by the internet advertising markets (see Section~\ref{sec:weighted}).

In Section~\ref{sec:online_bipartite_matching} we will show the following result, complementing the classic $\left(1 - \frac{1}{e}\right)$-competitiveness result of \textsc{Ranking} for the Online Bipartite Matching Problem \cite{KarpVV/STOC/1990} .

\begin{theorem}\label{thm:obm_concentration}
    Let $G = (S, B, E)$ be an instance of the Online Bipartite Matching Problem which admits a matching of size $n$.
    Then for any $\alpha > 0$ and any arrival order,
    \[
        \bbP\left[|M| < \left(1 - \frac{1}{e} - \alpha\right) n\right] < e^{- 2 \alpha^2 n}
    \]
    where $M$ is the random variable denoting the matching generated by \textsc{Ranking}.
\end{theorem}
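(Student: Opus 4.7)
The plan is to apply McDiarmid's bounded-differences inequality to the Ranking matching size $|M|$, viewed as a deterministic function of independent random ranks. Concretely, I would re-parametrize the uniformly random permutation $\pi$ by drawing i.i.d.\ uniform random variables $y_s \in [0,1]$ for each $s \in S$, and defining $\pi$ as the permutation obtained by sorting $S$ in increasing order of the $y_s$. The matching $M$ returned by Ranking then becomes a deterministic function $M(y)$ of the rank vector $y \in [0,1]^{|S|}$, with the randomness now carried by $|S|$ mutually independent coordinates.

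The central technical step is a bounded-differences estimate: whenever $y$ and $y'$ agree in every coordinate except $s$, we have $\bigl||M(y)|-|M(y')|\bigr| \le 1$. I would deduce this from a monotonicity lemma for Ranking, namely that deleting a single seller from the instance changes the matching size by at most one and never increases it. Writing $M_{-s}$ for the matching produced by Ranking on $S \setminus \{s\}$ with the other ranks fixed, both $|M(y)|$ and $|M(y')|$ must then lie in $\{|M_{-s}|, |M_{-s}|+1\}$, which immediately yields the Lipschitz bound. The monotonicity lemma itself I would prove by coupling the two runs of Ranking (with and without $s$) and tracing the effect of the removal along a single alternating path rooted at $s$ in the symmetric difference of the resulting matchings; this is a standard exchange argument for greedy-type online algorithms.

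Combining the Lipschitz estimate with McDiarmid's inequality and the classical Karp--Vazirani--Vazirani bound $\bbE[|M|] \ge (1 - 1/e)\, n$ then yields a Gaussian tail on the lower deviation of $|M|$. The main obstacle I anticipate is reconciling the number of independent coordinates, which is $|S|$ and in general may exceed $n$, with the $n$ appearing in the exponent of the target bound: a naive McDiarmid application gives $\exp(-2\alpha^2 n^2/|S|)$ rather than $\exp(-2\alpha^2 n)$. Closing this gap --- either through a preprocessing reduction to the case $|S|\le n$, or via a sharpened analysis that effectively charges only the sellers that can be matched in some realization --- is the delicate part of the argument; once this is handled, the concentration step itself is routine.
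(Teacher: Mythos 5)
Your proposal matches the paper's proof essentially step for step: the same re-parametrization by i.i.d.\ uniform ranks, the same bounded-differences lemma deduced from the fact that both $|M(y)|$ and $|M(y')|$ lie in $\{|M_{-s}|, |M_{-s}|+1\}$ for the rank-independent matching $M_{-s}$, and the same application of McDiarmid together with the $(1-\frac{1}{e})$ expectation bound. The step you flag as delicate is resolved exactly as you suggest --- the paper restricts to the $n$ sellers covered by a maximum matching and uses the monotonicity direction $|M| \geq |M_{-j}|$ of your own lemma to argue that adding the remaining sellers back never shrinks the output --- so your outline is complete.
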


The key technical ingredient for this result is a bounded differences property of the random variable $|M|$ (see Lemma~\ref{lem:bounded_diffs}).
We prove this via structural properties of matchings (see Lemma~\ref{lem:ranking_minus_j}) similar to ones which have been used in previous analyses of \textsc{Ranking} \cite{GoelM/SODA/2008,BirnbaumM/SIGACTNews/2008}.
Finally, this allows us to apply the following key lemma which is a consequence of Azuma's inequality.

\begin{lemma}[McDiarmid's Inequality \cite{McDiarmid/Book/1989}]\label{lem:mcdiarmid}
    Let $c_1, \ldots, c_n \in \bbR_+$ and consider some function $f : [0, 1]^n \rightarrow \bbR$ satisfying
    \[
        |f(x_1, \ldots, x_{i - 1}, x'_i, x_{i + 1}, \ldots, x_n) - f(x_1, \ldots, x_n)| \leq c_i
    \]
    for all $x \in [0, 1]^n$, $i \in [n]$ and $x'_i \in [0, 1]$.
    Moreover let $\Delta^n$ be the uniform distribution on $[0, 1]^n$.
    Then for all $t > 0$, we have
    \[
        \bbP_{x \sim \Delta^n}[f(x) < \bbE_{y \sim \Delta^n}[f(y)] - t] < e^{- \frac{2 t^2}{\sum_{i = 1}^n{c_i^2}}}.
    \]
\end{lemma}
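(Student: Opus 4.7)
The plan is to derive Lemma~\ref{lem:mcdiarmid} from the Azuma--Hoeffding martingale inequality by constructing the usual Doob martingale associated with exposing the coordinates $x_1, \ldots, x_n$ one at a time. Concretely, let $x = (x_1, \ldots, x_n) \sim \Delta^n$, define the filtration $\mathcal{F}_i = \sigma(x_1, \ldots, x_i)$ with $\mathcal{F}_0$ trivial, and set
\[
    Y_i = \bbE[f(x) \mid \mathcal{F}_i], \qquad i = 0, 1, \ldots, n.
\]
Then $Y_0 = \bbE_{y \sim \Delta^n}[f(y)]$, $Y_n = f(x)$, and $(Y_i)_{i=0}^n$ is a martingale with respect to $(\mathcal{F}_i)$ by the tower property.

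The core step, and the main place where the bounded differences hypothesis is consumed, is a range bound on the martingale increments $D_i := Y_i - Y_{i-1}$. First I would rewrite
\[
    Y_i = g_i(x_1, \ldots, x_i) \quad \text{where} \quad g_i(x_1, \ldots, x_i) = \int_{[0,1]^{n-i}} f(x_1, \ldots, x_i, u_{i+1}, \ldots, u_n)\, du_{i+1} \cdots du_n,
\]
so that $Y_{i-1} = \int_0^1 g_i(x_1, \ldots, x_{i-1}, t)\, dt$. Then I would argue, by a coupling on the tail coordinates $u_{i+1}, \ldots, u_n$ and the hypothesis on $f$, that for any two values $t, t' \in [0,1]$,
\[
    |g_i(x_1, \ldots, x_{i-1}, t) - g_i(x_1, \ldots, x_{i-1}, t')| \leq c_i.
\]
Consequently, as $x_i$ ranges over $[0,1]$, the random variable $D_i$ (conditional on $\mathcal{F}_{i-1}$) takes values in an interval of length at most $c_i$.

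With this range bound in hand, the last step is to invoke the sharp (Hoeffding-style) version of Azuma's inequality: if $(Y_i)$ is a martingale whose increments $D_i$ are almost surely supported in intervals of length at most $c_i$, then
\[
    \bbP[Y_n - Y_0 < -t] < \exp\!\left(-\frac{2 t^2}{\sum_{i=1}^n c_i^2}\right).
\]
Substituting $Y_0 = \bbE[f(y)]$ and $Y_n = f(x)$ yields exactly the one-sided tail bound claimed in the lemma.

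The main obstacle is the range bound on $D_i$; the rest is bookkeeping on the Doob construction and a black-box application of Azuma--Hoeffding. The key subtlety is that one must bound the \emph{range} of $D_i$ by $c_i$ (not just $|D_i|$ by $c_i$), since using the range is what produces the constant $2$ in the exponent rather than $1/2$. This is precisely what the coupling argument above delivers: integrating out the unseen coordinates $u_{i+1}, \ldots, u_n$ against a common coupling preserves the $c_i$-Lipschitz property in the $i$-th coordinate.
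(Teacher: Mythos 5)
Your proof is correct and is the standard argument: the Doob martingale obtained by exposing coordinates one at a time, the observation that the bounded-differences hypothesis survives integration over the unexposed coordinates so that each increment $D_i$ is conditionally supported in an interval of length $c_i$, and then the range (Hoeffding) form of Azuma's inequality, which is indeed what yields the constant $2$ rather than $1/2$ in the exponent. The paper does not prove this lemma at all---it cites McDiarmid and merely remarks that it is ``a consequence of Azuma's inequality''---so your write-up simply fleshes out exactly the route the authors allude to, and there is nothing substantive to contrast.
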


In Section~\ref{sec:fully_online_matching} we will define the Fully Online Matching Problem and the natural extension of \textsc{Ranking} for this setting.
We will then give a similar concentration bound as in Theorem~\ref{thm:obm_concentration}.

\begin{theorem}\label{thm:fom_concentration}
    Let $G$ be an instance of the Fully Online Matching Problem which admits a matching of size $n$.
    Then for any $\alpha > 0$,
    \[
        \bbP\left[|M| < \left(\rho - \alpha\right) n\right] < e^{- \alpha^2 n}
    \]
    where $M$ is the random variable denoting the matching generated by \textsc{Ranking} and $\rho$ is the competitive ratio of \textsc{Ranking}. 
\end{theorem}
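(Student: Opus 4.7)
The plan is to mirror the approach used for Theorem~\ref{thm:obm_concentration}: express $|M|$ as a deterministic function of independent uniform priorities, establish a bounded differences property for this function, and then invoke Lemma~\ref{lem:mcdiarmid}.

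First I would spell out the natural extension of \textsc{Ranking} to the Fully Online Matching setting that will appear in Section~\ref{sec:fully_online_matching}. The standard formulation assigns each vertex $v$ of $G$ an independent priority $y_v \in [0,1]$ drawn uniformly at random, and upon each arrival the arriving vertex is matched to its currently unmatched neighbor of highest priority (if any), otherwise it is left free to be matched later. Thus $|M|$ becomes a deterministic function of the priority vector $y \in [0,1]^{V(G)}$, which puts us in the setting where Lemma~\ref{lem:mcdiarmid} applies provided we can control the sensitivity of $|M|$ to each coordinate.

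Second, and this is the main technical step, I would prove that this function satisfies a bounded differences property with $c_v = 1$ for every vertex $v$: perturbing a single priority changes $|M|$ by at most one. The natural route is a structural argument analogous to Lemma~\ref{lem:ranking_minus_j}: starting from the run of \textsc{Ranking} on $y$ and comparing with the run on $y'$ (differing only in coordinate $v$), follow the cascade of re-matchings triggered at $v$ and show that the symmetric difference $M(y) \triangle M(y')$ forms a single alternating path that possibly closes into an alternating cycle, so the net change in matching size is at most one. This step will be the main obstacle, since in the fully online setting each vertex plays the dual role of both proposer (when it arrives) and recipient (when a later neighbor arrives), so the cascade induced by altering $y_v$ is more subtle to track than in the one-sided bipartite case of Lemma~\ref{lem:bounded_diffs}.

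Finally, assuming the bounded differences lemma, I would apply Lemma~\ref{lem:mcdiarmid} with $t = \alpha n$. After pruning vertices that cannot influence the matching one may assume $|V(G)| \le 2n$, since every matched edge contributes two endpoints and every other vertex is irrelevant, so $\sum_v c_v^2 \le 2n$. Combined with the lower bound $\bbE[|M|] \ge \rho n$ coming from the $\rho$-competitiveness of \textsc{Ranking}, this yields
\[
    \bbP\bigl[|M| < (\rho - \alpha) n\bigr] \le \bbP\bigl[|M| < \bbE[|M|] - \alpha n\bigr] < e^{- 2 (\alpha n)^2 / (2 n)} = e^{-\alpha^2 n},
\]
which is the desired bound. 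The factor-of-two loss relative to Theorem~\ref{thm:obm_concentration} reflects the fact that in FOM each matched edge contributes two random priorities rather than one.
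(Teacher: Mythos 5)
Your overall strategy---realize $|M|$ as a function of independent uniform ranks, prove bounded differences with constant $1$, reduce to a $2n$-vertex induced subgraph, and apply Lemma~\ref{lem:mcdiarmid} with $t = \alpha n$ and $\sum_v c_v^2 = 2n$---is exactly the paper's, and your final computation $e^{-2(\alpha n)^2/(2n)} = e^{-\alpha^2 n}$ matches. There are, however, two places where your plan either takes a harder road or has a genuine misstep. First, for the bounded differences step you propose to compare the runs on $y$ and $y'$ directly and argue that $M(y) \triangle M(y')$ is a single alternating path or cycle; this can be made to work but is the delicate way. The paper instead proves only the vertex-\emph{removal} statement (the analogue of Lemma~\ref{lem:ranking_minus_j}, which for this setting is essentially Lemma~2.3 of \cite{HuangKTWZZ/STOC/2018}): for fixed ranks, $|M_{-v}| \le |M| \le |M_{-v}| + 1$, where $M_{-v}$ is the matching produced on $G$ with $v$ deleted. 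Since $M_{-v}$ does not depend on $y_v$ at all, both $f(y)$ and $f(y')$ lie in the same unit interval $[\,|M_{-v}|, |M_{-v}|+1\,]$, and $|f(y) - f(y')| \le 1$ follows with no cascade-tracking whatsoever; you cite Lemma~\ref{lem:ranking_minus_j} as the model but do not borrow this sandwich trick from the proof of Lemma~\ref{lem:bounded_diffs}, which is what makes the removal lemma suffice. Second, your justification for pruning to $2n$ vertices---``every other vertex is irrelevant''---is false as stated: vertices not covered by the fixed maximum matching can certainly change which edges \textsc{Ranking} selects. The correct argument, used in the paper, is again the removal lemma: by repeated application, the matching produced on $G$ is pointwise at least as large as the one produced on the induced subgraph $G_M$, so the concentration bound proved for $G_M$ transfers to $G$. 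Finally, a small modelling point: the extension of \textsc{Ranking} for the Fully Online Matching Problem matches a vertex at its \emph{departure} time, to the unmatched neighbor of minimum rank, not upon arrival; the guarantee $\rho > 0.521$ is for that version of the algorithm.
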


We remark that by \cite{HuangKTWZZ/STOC/2018}, we know $\rho > 0.521$ and for the special case where $G$ is bipartite, we have $\rho = W(1) \approx 0.567$.

Lastly, in Section~\ref{sec:weighted} we will consider the Online Vertex-Weighted Bipartite Matching Problem.
In this setting, a generalization of \textsc{Ranking} was shown to be $(1 - \frac{1}{e})$-competitive by Aggarwal et al.\ \cite{AggarwalGKM/SODA/2011}.
We will modify this algorithm to show the following.

\begin{theorem}\label{thm:ovwbm_concentration}
    For any $\alpha > 0$, there exists a variant of \textsc{Ranking} such that for any instance $G = (S, B, E)$ with weights $w : S \rightarrow \bbR_+$ of the Online Vertex-Weighted Bipartite Matching, any arrival order of $B$ and any matching $M^*$,
    \[
        \bbP\left[w(M) < \left(1 - \frac{1}{e} - \alpha\right) w(M^*)\right] < e^{- \frac{1}{50} \alpha^4 \frac{w(M^*)^2}{||w||_2^2}}
    \]
    where $M$ denotes the matching generated by \textsc{Ranking} and
    \[
        w(M) \coloneqq \sum_{\{i, j\} \in M} w_j.
    \]
\end{theorem}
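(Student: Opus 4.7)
The plan is to define the variant of \textsc{Ranking} carefully, establish a bounded-differences property for it, and then apply McDiarmid's inequality (Lemma~\ref{lem:mcdiarmid}), following the template of Theorem~\ref{thm:obm_concentration}. The natural starting point is the perturbed-greedy algorithm of Aggarwal et al.\ \cite{AggarwalGKM/SODA/2011}: draw an independent uniform $x_j \in [0,1]$ for each seller $j$, and match every arriving buyer $i$ to the unmatched $j \in N(i)$ maximizing $w_j \cdot \psi(x_j)$, where $\psi(x) = 1 - e^{x-1}$. This algorithm is known to satisfy $\bbE[w(M)] \geq (1 - 1/e) w(M^*)$.

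The first technical step is to lift Lemma~\ref{lem:ranking_minus_j} to the AGKM algorithm: after fixing $x_{j'}$ for every $j' \neq j$, resampling $x_j$ changes $w(M)$ by at most $O(w_j)$. Intuitively, the ``swap cascade'' triggered by a changed score moves along an alternating path of sellers whose scores $w_{j'}\psi(x_{j'})$ are all comparable to that of $j$, so the per-edge weight changes should telescope. If one could then apply McDiarmid directly with $c_j = O(w_j)$ and $\sum_j c_j^2 = O(\|w\|_2^2)$, one would in fact obtain the stronger bound $e^{-\Omega(\alpha^2 w(M^*)^2/\|w\|_2^2)}$; the weaker $\alpha^4$ in the statement is the tell-tale sign that one should argue bucket-by-bucket instead.

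I would therefore define the variant by (i) discarding sellers whose weight lies outside the window $[\alpha w(M^*)/|S|,\; w(M^*)]$ (such sellers contribute at most $O(\alpha)\,w(M^*)$ to any matching), (ii) rounding each surviving weight down to the nearest power of $(1+\alpha)$, losing another $O(\alpha)$ factor in expectation, and (iii) running AGKM on the rounded instance. This partitions the active sellers into $K = O(1/\alpha)$ weight buckets $B_1,\ldots,B_K$ (suppressing a logarithmic factor in $\alpha$) on which weights are essentially uniform. Writing $w(M) = \sum_k w(M \cap B_k)$, I would apply McDiarmid separately to each bucket's contribution with constants $c_j = \Theta(w_k)$ for $j \in B_k$ and $c_j = 0$ elsewhere, aiming for deviation $\alpha w(M^*)/K$ in each bucket. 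Since $\sum_{j \in B_k} c_j^2 \leq \|w\|_2^2$ and $K^{-2} = \Theta(\alpha^2)$, each bucket fails with probability at most $e^{-\Omega(\alpha^4 w(M^*)^2/\|w\|_2^2)}$, and a union bound over the $K$ buckets preserves this form (the polylogarithmic factor absorbing into the constant $1/50$).

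The main obstacle is the weighted analogue of Lemma~\ref{lem:ranking_minus_j}. In AGKM the greedy criterion depends on the combined score $w_j \psi(x_j)$ rather than on rank alone, so a priori the cascade triggered by resampling $x_j$ could pass through sellers of arbitrarily larger weight and produce a change in $w(M)$ much bigger than $w_j$. Bucketing is what tames this: within a single bucket of nearly uniform weights, the AGKM criterion collapses to the ordinary \textsc{Ranking} criterion up to $(1+O(\alpha))$ factors, so the unweighted structural argument applies almost verbatim to bound the in-bucket cascade by $O(w_k)$. Verifying that cross-bucket interactions do not destroy this bound, for instance by showing that a cascade leaves a bucket only when it strictly improves score, is the key technical point I would need to work through carefully.
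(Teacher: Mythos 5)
You correctly identify the central obstacle---that in the Aggarwal et al.\ algorithm a resampled rank can trigger a displacement cascade through sellers of much larger weight, because $w_j(1 - e^{x_j - 1})$ can be arbitrarily close to $0$---and you correctly read the $\alpha^4$ as the signature of a two-parameter trade-off. But your proposed fix does not close the gap. The step where you apply McDiarmid to $w(M \cap B_k)$ with $c_j = \Theta(w_k)$ for $j \in B_k$ and $c_j = 0$ elsewhere is invalid: $w(M \cap B_k)$ is \emph{not} independent of the ranks of sellers outside $B_k$. Resampling $x_j$ for a light seller $j$ can displace a heavy seller $j'$ (precisely the situation of Figure~\ref{fig:bad_concentration}, where $j'$ has score near $0$ despite $w_{j'} \gg w_j$), changing the contribution of $j'$'s bucket by $\pm w_{j'}$. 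Bucketing does not confine the cascade, because the cascade follows decreasing \emph{score} $w_{j'}(1 - e^{x_{j'}-1})$, not decreasing weight; this is exactly the ``cross-bucket interaction'' you flag as needing care, and it is where the argument breaks. Two further problems: your discarding window $[\alpha w(M^*)/|S|,\, w(M^*)]$ makes the variant depend on $w(M^*)$, which is universally quantified in the statement and unknown to the algorithm; and the number of buckets is $\Theta(\alpha^{-1}\log(|S|/\alpha))$, so the per-bucket deviation budget costs a $\log^2 |S|$ factor in the exponent that cannot be absorbed into the universal constant $\frac{1}{50}$.

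The paper's resolution is different and avoids bucketing entirely: it changes the perturbation function to $w_j(1 - e^{x_j - 1 - \epsilon})$ ($\epsilon$-\textsc{Ranking}, Algorithm~\ref{alg:ranking_5}), so that every seller's multiplier is at least $1 - e^{-\epsilon} \geq (1 - \frac{1}{e})\epsilon$. Along any displacement cascade the score is monotone, so the weight of any seller the cascade touches is at most $\frac{2}{\epsilon}$ times the weight of the resampled seller (Lemma~\ref{lem:ranking_minus_j_3}), giving a single global bounded-differences bound $c_j = (1 + \frac{2}{\epsilon})w_j$ and one application of McDiarmid with $\sum_j c_j^2 = (1 + \frac{2}{\epsilon})^2 \|w\|_2^2$. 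The price is an $\epsilon$ loss in the competitive ratio, which must be re-proved for the modified function via the primal-dual/economic argument (Lemmas~\ref{lem:dual_feasible} and~\ref{lem:ranking_competitive}); setting $\epsilon = \alpha/2$ and balancing the competitiveness loss against the $1/\epsilon$ blow-up in the Lipschitz constants produces the $\alpha^4$. If you want to salvage your outline, the missing ingredient is precisely a mechanism that bounds the weight of every seller a cascade can reach in terms of the weight of the seller whose rank was resampled; the $\epsilon$-shift is that mechanism.
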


\section{Online Bipartite Matching}\label{sec:online_bipartite_matching}

In order to analyze \textsc{Ranking}, it is common to replace the sampling of the permutation $\pi$ in Algorithm~\ref{alg:ranking} by sampling an independent, uniform $x_j \in [0, 1]$ for every $j \in S$ called the \emph{rank} of $j$.
Then, sorting $S$ by the values of $x_j$ yields a uniformly random permutation.
Formally, this is Algorithm~\ref{alg:ranking_2}.

\medskip
\begin{algorithm2e}[H]
    \For{$j \in S$}{
        Sample a uniformly random $x_j \in [0, 1]$.
    }
    \For{each buyer $i$ who arrives}{
        Match $i$ to an unmatched $j \in N(i)$ minimizing $x_j$.
    }
    \caption{\textsc{Ranking}\label{alg:ranking_2}}
\end{algorithm2e}
\medskip

In the following, consider a fixed graph $G = (S, B, E)$ with a fixed arrival order.
Assume that $|S| = n$ and that $G$ has a matching of size $n$.
We define a function $f : [0, 1]^S \rightarrow \bbR$ by letting $f(y)$ be the size of the matching $M$ generated by Algorithm~\ref{alg:ranking_2} if $x_j = y_j$ for all $j \in S$.
Our goal will then be to show the following Lemma which is a different perspective on a structural property that appears under various forms in the online matching literature (e.g.\ Lemma~2 in \cite{BirnbaumM/SIGACTNews/2008}).

\begin{lemma}[Bounded Differences]\label{lem:bounded_diffs}
    Let $x \in [0, 1]^S$, $j^\star \in S$ and $\theta \in [0, 1]$ be arbitrary.
    Define $x'_j$ to be $\theta$ if $j = j^\star$ and $x_j$ otherwise.
    Then $|f(x) - f(x')| \leq 1$.
\end{lemma}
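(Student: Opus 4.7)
\begin{proof*}{Proof sketch}

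My plan is to reduce this bounded-differences statement to a ``removal'' bound on the RANKING matching. The key observation is that both runs of RANKING with $x$ and with $x'$ can be compared to a common pivot---the run of RANKING on the instance with seller $j^\star$ deleted entirely---and each differs from the pivot by at most one edge.

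First I would introduce the auxiliary matching $M^-$ obtained by running RANKING on $G - j^\star$ with the ranks $(x_j)_{j \in S \setminus \{j^\star\}}$. Because $x$ and $x'$ agree outside of $j^\star$, the value $|M^-|$ does not depend on which of the two rank vectors we started from. I would then establish (this is essentially the content of Lemma~\ref{lem:ranking_minus_j}) the two-sided bound
\[
    f(x) - 1 \leq |M^-| \leq f(x), \qquad f(x') - 1 \leq |M^-| \leq f(x').
\]
The lower bounds are the standard statement that deleting one seller reduces the RANKING matching by at most one edge; the upper bounds say that deletion cannot increase it. Both follow from the alternating-path picture of RANKING: deleting $j^\star$ either leaves the matching unchanged (when $j^\star$ is unmatched) or triggers a cascade in which the buyer matched to $j^\star$ takes its next-best unmatched seller, which displaces that seller's original partner, and so on. This cascade traces a single alternating path that terminates either at a buyer with no remaining options (size drops by one) or at a seller unmatched in the original run (size preserved).

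Given these bounds, the remainder is routine chaining. Combining $|M^-| \leq f(x)$ with $f(x') \leq |M^-| + 1$ yields $f(x') \leq f(x) + 1$; by symmetry $f(x) \leq f(x') + 1$, and hence $|f(x) - f(x')| \leq 1$.

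The main obstacle is the structural Lemma~\ref{lem:ranking_minus_j} itself: isolating the cascade as a single alternating path, and showing that each step behaves as claimed, requires a careful induction over the arrivals of the buyers in which one keeps track of exactly which vertex is ``free'' at each point. Once that structural content is in hand, the bounded-differences bound follows from the short chaining argument above; in particular, a direct comparison of the two runs with $x$ and $x'$ (without the pivot) is much harder because the two runs may diverge at every arrival after $j^\star$'s position in the permutation changes.

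\end{proof*}
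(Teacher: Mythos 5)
Your proposal is correct and follows essentially the same route as the paper: compare both runs to the pivot matching $M_{-j^\star}$ on $G - j^\star$ (which is independent of the rank of $j^\star$), apply the removal bound of Lemma~\ref{lem:ranking_minus_j} to each run, and chain the resulting two-sided inequalities. The only cosmetic difference is that you motivate Lemma~\ref{lem:ranking_minus_j} via an alternating-path cascade, whereas the paper proves it by induction over buyer arrivals tracking the sets of unmatched neighbors; the bounded-differences deduction itself is identical.
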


Note that Lemma~\ref{lem:bounded_diffs} implies Theorem~\ref{thm:obm_concentration} via McDiarmid's inequality (Lemma~\ref{lem:mcdiarmid}).
Specifically, by applying McDiarmid to the function $f$ with $c \equiv 1$ we get
\begin{align*}
    \bbP\left[|M| < \left(1 - \frac{1}{e} - \alpha\right) n\right] &\leq \bbP_{x \sim \Delta^S}[f(x) < \bbE_{y \sim \Delta^S}[f(y)] - \alpha n] \\
        &\leq e^{-\alpha n^2}
\end{align*}
where we used that $(1 - \frac{1}{e}) n \leq \bbE_{y \sim \Delta^S}[f(y)]$ since \textsc{Ranking} is $(1 - \frac{1}{e})$-competitive.
It remains to prove Lemma~\ref{lem:bounded_diffs}.

\begin{lemma}\label{lem:ranking_minus_j}
    Let $j \in S$, then we can define the graph $G_{-j}$ which contains all vertices of $G$ except for $j$.
    For some fixed values of $x \in [0, 1]^S$, we let $M$ be the matching produced by \textsc{Ranking} in $G$ and let $M_{-j}$ be the matching produced by \textsc{Ranking} in $G_{-j}$.
    Then $|M_{-j}| \leq |M| \leq |M_{-j}| + 1$.
\end{lemma}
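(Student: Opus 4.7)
The plan is to couple the executions of \textsc{Ranking} on $G$ and on $G_{-j}$ by using the same rank $x_\ell$ for every $\ell \in S \setminus \{j\}$, and then to track how the partial matchings evolve in lockstep via forward induction over the buyer arrival order. Let $i^\star$ denote the (unique, if any) buyer that is matched to $j$ in $M$. If no such $i^\star$ exists, then \textsc{Ranking} on $G$ never actually matches anything to $j$, and a straightforward induction shows the two runs make identical decisions throughout, giving $M = M_{-j}$. So assume $i^\star$ exists.

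The argument then splits into two phases. In the first phase (strictly before $i^\star$'s arrival), the same easy induction shows that the partial matchings in $G$ and in $G_{-j}$ are identical: the matched-seller sets agree, so the available neighborhoods agree up to the presence of $j$, and by definition of $i^\star$ no prior buyer picks $j$ in $G$, so each arriving buyer makes the same choice in both runs. At $i^\star$'s arrival the runs diverge: $i^\star$ picks $j$ in $G$, and in $G_{-j}$ it either picks the next lowest-ranked available neighbor $j'$ or goes unmatched. In the second phase I would maintain the invariant that, letting $S^\star_G$ and $S^\star_{-j}$ denote the currently matched seller sets, either (a) $S^\star_G = (S^\star_{-j} \cup \{j\}) \setminus \{j'\}$ for some $j' \notin S^\star_{-j}$, in which case $|S^\star_G| = |S^\star_{-j}|$, or (b) $S^\star_G = S^\star_{-j} \cup \{j\}$, in which case $|S^\star_G| = |S^\star_{-j}| + 1$.

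The transition analysis is the main obstacle. State (b) is easy: the available sellers in the two runs actually coincide (both are $S \setminus (S^\star_{-j} \cup \{j\})$), so every subsequent buyer is handled identically and (b) is preserved forever. State (a) requires care: for an arriving buyer $i$, the set of available neighbors in $G$ equals the set available in $G_{-j}$ together with $j'$ whenever $j' \in N(i)$. If $i$'s lowest-rank available neighbor in $G$ is not $j'$, then the same seller is picked in both runs and (a) is preserved with the same $j'$. Otherwise $i$ picks $j'$ in $G$; in $G_{-j}$ it then either picks the next lowest-rank available neighbor $j''$ (updating the invariant to state (a) with $j'$ replaced by $j''$) or has no available neighbor and goes unmatched (transitioning to state (b)). The subtle check is that the replacement $j''$ in the first subcase really is unmatched in $G$ immediately after $i$'s turn, which follows because $j''$ was available in $G_{-j}$ and hence also in $G$ just before $i$'s turn, but $i$ picked the strictly lower-ranked $j'$ in $G$.

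Since the invariant holds after $i^\star$ and is preserved by every subsequent buyer, the system ends in either state (a) with $|M| = |M_{-j}|$ or state (b) with $|M| = |M_{-j}| + 1$, yielding $|M_{-j}| \leq |M| \leq |M_{-j}| + 1$ as required.
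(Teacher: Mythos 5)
Your proof is correct and follows essentially the same route as the paper: a coupling of the two runs with an induction over arrivals maintaining the invariant that the matched seller sets differ by at most one ``floating'' seller $j'$, which the paper phrases equivalently as $N^{(i)}(i') = N^{(i)}_{-j}(i')$ or $N^{(i)}(i') = N^{(i)}_{-j}(i') \cup \{j'\}$. One small slip: in state (a) you write $j' \notin S^\star_{-j}$, but you mean $j' \in S^\star_{-j}$ (equivalently $j' \notin S^\star_G$) --- as written the set identity collapses to state (b); the rest of your transition analysis clearly treats $j'$ as matched in $G_{-j}$ but still available in $G$, so this is only a typo.
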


\begin{proof}
    For any buyers $i, i' \in B$, let $N^{(i)}(i')$ be the set of neighbors of $i'$ in $G$ which are unmatched by the time that $i$ arrives in the run of \textsc{Ranking} with the fixed values of $x$.
    Likewise, let $N^{(i)}_{-j}(i')$ be the set of unmatched neighbors of $i'$ in the run of \textsc{Ranking} on $G_{-j}$ when $i$ arrives.
    We claim that for all $i \in B$ there exists some $j' \in S$ such that for all $i' \in B$ we have $N^{(i)}(i') = N^{(i)}_{-j}(i')$ or $N^{(i)}(i') = N^{(i)}_{-j}(i') \cup \{j'\}$.
    
    Let us show this claim via induction on $i \in B$ in order of arrival.
    Note that when the first buyer arrives, this holds for $j' = j$ because we have removed only $j'$ from the graph and nobody has been matched yet.
    Now assume that the statement holds when $i$ arrives, we need to see that it still holds after $i$ has been matched.
    Clearly, if $i$ gets matched to the same vertex in $G$ and in $G_{-j}$, then the inductive step follows trivially.
    
    So now assume that $i$ gets matched to different vertices in $G$ and in $G_{-j}$.
    By the inductive hypothesis this can only happen if $i$ gets matched to $j'$ in $G$ and it gets matched to some other $j''$ (potentially $j'' = \bot$, i.e.\ it is not matched at all) in $G_{-j}$.
    But then $N^{(i + 1)}(i') = N_{-j}^{(i + 1)}(i')$ or $N^{(i + 1)}(i') = N_{-j}^{(i + 1)}(i') \cup \{j''\}$ for all $i' \in B$.
    Thus the claim holds by induction.
    
    Finally, let us see that the claim implies the lemma.
    First note that since $i$ always has more unmatched neighbors in $G$ than in $G_{-j}$, we have $|M| \geq |M_{-j}|$.
    But on the other hand, if at some time in the algorithm $i$ is matched to $j'$ in $G$ and not matched at all in $G_{-j}$, then we have that $N^{(i + 1)}(i') = N_{-j}^{(i + 1)}(i')$ for all $i' \in B$.
    Thus the two runs will be identical from that point onward and $|M| = |M_{-j}| + 1$.\qed
\end{proof}

Finally, we can show that this implies the bounded differences property of $f$ that we claimed in Lemma~\ref{lem:bounded_diffs}.

\begin{proof*}{Proof of Lemma~\ref{lem:bounded_diffs}}
    By Lemma~\ref{lem:ranking_minus_j} we know that removing a seller from the graph can decrease the size of the matching computed by \textsc{Ranking} by at most one assuming that the values of the $x_j$ are fixed.
    But of course if we are removing $j^\star \in S$, the matching $M_{-j^\star}$ computed by \textsc{Ranking} in $G_{-j^\star}$ does not depend on the value of $x_{j^\star}$ or $x'_{j^\star}$.
    So we have
    \[
        |M_{-j^\star}| \leq f(x) \leq |M_{-j^\star}| + 1
    \]
    and
    \[
        |M_{-j^\star}| \leq f(x') \leq |M_{-j^\star}| + 1
    \]
    which implies $|f(x) - f(x')| \leq 1$ as claimed.\qed
\end{proof*}

As we have already seen, this is enough to prove Theorem~\ref{thm:fom_concentration} in the case where $|S| = n$.
To prove the general case we can use a simple reduction.
In particular, assuming that there is a matching $M$ of size $n$ but $|S| > n$, let $S_M$ be the sellers covered by $M$ and let $G_M = (S_M, B, E)$.
We have seen in Lemma~\ref{lem:ranking_minus_j} that for any fixed $x \in [0, 1]^S$, \textsc{Ranking} will produce a matching in $G$ that is not smaller than the matching it produces in $G_M$ when run with $x$ restricted to $S_M$.
Therefore, Theorem~\ref{thm:fom_concentration} on $G_M$ implies Theorem~\ref{thm:fom_concentration} on $G$ which establishes the general case.

\section{Fully Online Matching}\label{sec:fully_online_matching}

In the Fully Online Matching Problem we have a not necessarily bipartite graph $G$ the vertices of which arrive and depart online in adversarial order.
When a vertex arrives, it reveals all of its edges to vertices that have already arrived.
By the time it departs we are guaranteed to have been revealed its entire neighborhood.

This problem was introduced by Huang et al.\ \cite{HuangKTWZZ/STOC/2018} and is motivated by ride-sharing.
Each vertex represents a rider who, upon arrival, is willing to wait only for a certain amount of time.
Two riders can only be matched if the time that they spend on the platform overlaps, even in the offline solution.
This additional condition allows Huang et al.\ to show that the generalization of \textsc{Ranking} shown in Algorithm~\ref{alg:ranking_3} is $0.521$-competitive in general and $0.567$-competitive on bipartite graphs.

\medskip
\begin{algorithm2e}[H]
    \For{vertex $i$ who arrives}{
        Sample a uniformly random $x_i \in [0, 1]$.
    }
    \For{vertex $i$ who departs}{
        Match $i$ to an unmatched $j \in N(i)$ minimizing $x_j$.
    }
    \caption{\textsc{Fully Online Ranking}\label{alg:ranking_3}}
\end{algorithm2e}
\medskip

In order to show a concentration bound, we can apply similar techniques as in Section~\ref{sec:online_bipartite_matching}.
Let $G = (V, E)$ be a graph which admits a perfect matching of size $n$.
Then let $f : [0, 1]^V \rightarrow \bbR$ represent once again the size of the matching generated by Algorithm~\ref{alg:ranking_3} when given the $x_i$ values.
The corresponding bounded differences condition then becomes:

\begin{lemma}[Bounded Differences]\label{lem:bounded_diffs_2}
    Let $x \in [0, 1]^V$, $i^\star \in V$ and $\theta \in [0, 1]$ be arbitrary.
    Define $x'_i$ to be $\theta$ if $i = i^\star$ and $x_i$ otherwise.
    Then $|f(x) - f(x')| \leq 1$.
\end{lemma}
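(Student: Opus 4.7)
The plan is to mirror the bipartite argument from Section~\ref{sec:online_bipartite_matching}: establish a graph-removal lemma analogous to Lemma~\ref{lem:ranking_minus_j}, and then deduce Lemma~\ref{lem:bounded_diffs_2} exactly as Lemma~\ref{lem:bounded_diffs} was deduced. Let $G_{-i^\star}$ denote the instance obtained from $G$ by deleting $i^\star$ along with its arrival and departure events, and let $M_{-i^\star}$ be the matching produced by Algorithm~\ref{alg:ranking_3} on $G_{-i^\star}$ with the restricted rank vector. The main structural claim I would prove is
\[
    |M_{-i^\star}| \leq |M| \leq |M_{-i^\star}| + 1.
\]
Once this is in hand, Lemma~\ref{lem:bounded_diffs_2} is immediate: $M_{-i^\star}$ does not depend on $x_{i^\star}$, so both $f(x)$ and $f(x')$ lie in $\{|M_{-i^\star}|, |M_{-i^\star}|+1\}$, and hence $|f(x) - f(x')| \leq 1$.

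To prove the structural claim I would couple the runs of Algorithm~\ref{alg:ranking_3} on $G$ and on $G_{-i^\star}$ event by event. Let $U(t)$ be the waiting pool of the $G$-run and $D(t)$ the set of vertices that have already departed without a partner in the $G$-run; define $U'(t), D'(t)$ analogously for the $G_{-i^\star}$-run and set $F(t) \coloneqq U(t) \cup D(t)$ and $F'(t) \coloneqq U'(t) \cup D'(t)$ (the ``never-matched'' sets). The invariant I would maintain by induction on events is $|F(t) \triangle F'(t)| \leq 1$. Arrivals of vertices other than $i^\star$ add the same vertex to both sides and preserve the symmetric difference; up until $i^\star$ first enters the system the two runs are identical, so the invariant holds trivially, and the arrival of $i^\star$ itself creates the initial discrepancy $\{i^\star\}$. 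The departure of $i^\star$ in the $G$-run either matches $i^\star$ to some current neighbor $j$, shifting the discrepancy from $i^\star$ to $j$, or leaves $i^\star$ unmatched, which only moves $i^\star$ from $U$ into $D$ without changing $F$. For a departure of $i \neq i^\star$, a short case split on whether the unique disputed vertex $k \in F \triangle F'$ is a neighbor of $i$, and on whether $k$ wins the argmin comparison, shows that the event either preserves $F \triangle F'$ or swaps $k$ for a single new vertex.

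At the end of the execution $U = U' = \emptyset$, so $F = D$ and $F' = D'$ with $|D \triangle D'| \leq 1$. Combined with the identities $2|M| = |V(G)| - |D|$ and $2|M_{-i^\star}| = |V(G)| - 1 - |D'|$, the fact that $|D|$ and $|D'|$ must then have opposite parities forces $|D| - |D'| \in \{-1, +1\}$, which yields $|M| - |M_{-i^\star}| \in \{0, 1\}$ and completes the structural claim.

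The main obstacle is the departure case for $i \neq i^\star$ when the disputed vertex $k$ is a neighbor of $i$ present in only one of the two pools, so that the two runs match $i$ to different partners: one has to verify carefully that the net effect on $F \triangle F'$ is still a one-for-one replacement of $k$ by the competing argmin rather than the creation of a second, independent discrepancy. Beyond this somewhat technical case check the argument is a direct generalization of Lemmas~\ref{lem:ranking_minus_j} and~\ref{lem:bounded_diffs}, and the bounded-differences conclusion drops out from the structural sandwich in the same way as in the bipartite setting.
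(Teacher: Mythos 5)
Your proposal is correct and follows essentially the same route as the paper: the paper also reduces Lemma~\ref{lem:bounded_diffs_2} to the removal sandwich $|M_{-i^\star}| \leq |M| \leq |M_{-i^\star}| + 1$ (its Lemma~\ref{lem:ranking_minus_j_2}), proved by an event-by-event coupling that tracks a single disputed vertex, exactly as you do. The only cosmetic difference is bookkeeping: the paper maintains a one-sided containment of unmatched-neighbor sets (the $G$-run has at most one extra available vertex $j'$) and closes by observing that once $j'$ is consumed the two runs coincide, whereas you track the symmetric difference of never-matched sets and finish with a parity argument; both are valid instances of the same coupling.
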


This implies Theorem~\ref{thm:fom_concentration} as before though note that this time we will lose a factor of 2 since we now have $2 n$ variables.
We remark that this follows directly from Lemma~2.3 in \cite{HuangKTWZZ/STOC/2018} but for completeness we will give a short proof sketch.

\begin{lemma}\label{lem:ranking_minus_j_2}
    Using the notation from Lemma~\ref{lem:ranking_minus_j}, we have $|M_{-j}| \leq |M| \leq |M_{-j}| + 1$ for any $j \in V$ and fixed values of $x \in [0, 1]^V$.
\end{lemma}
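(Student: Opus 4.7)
The plan is to mimic the proof of Lemma~\ref{lem:ranking_minus_j} by coupling the two executions of \textsc{Fully Online Ranking} on $G$ and on $G_{-j}$ using the same rank samples, and proceeding by induction over their common event stream of arrivals and departures. The invariant I want to maintain, at every time, is that there exists a vertex $j'$ such that for every vertex $i'$ which has arrived but not yet departed, the sets of currently unmatched neighbors of $i'$ in $G$ and in $G_{-j}$ either coincide or differ by exactly $\{j'\}$ on one side or the other. Initially, before $j$ arrives in $G$, the two runs are identical and the invariant holds trivially.

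The inductive step splits into arrivals and departures. An arrival of a vertex $i$ only assigns a rank and reveals edges without changing anyone's matched status, so the invariant is preserved with the same $j'$; the only asymmetric arrival is that of $j$ itself in $G$, which installs the invariant with $j' := j$. A departure of some $i \neq j$ triggers a matching decision that follows the analysis in Lemma~\ref{lem:ranking_minus_j} essentially verbatim: if both runs pick the same neighbor (or neither picks anything), the invariant is preserved; otherwise, the run whose unmatched-neighbor set is strictly larger picks $j'$ while the other picks some $j''$ (or nothing), and the swap vertex is updated to $j''$ (or the two runs become fully coupled from that moment on).

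The principal new obstacle, specific to the fully online setting, is the departure of $j$ itself in $G$, which has no counterpart in the execution on $G_{-j}$. If $j$ is already matched when it departs, the event is a no-op. If $j$ is still unmatched, then up to this moment $j'$ must have remained equal to $j$, since $j'$ only moves away from $j$ in a departure event in which $j$ itself gets matched. In this case, the run on $G$ may match $j$ to some neighbor $j''$, producing in $G$ only the new edge $\{j, j''\}$; this \emph{reverses the direction} of the swap, with $j''$ now matched in $G$ but still unmatched in $G_{-j}$, and the invariant is restored with $j' := j''$ on the other side. For this reason the invariant must be stated symmetrically (the extra vertex allowed on either side) from the outset; verifying that subsequent departure events preserve this symmetric invariant is a routine case analysis analogous to the asymmetric one. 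Once the invariant is carried through the final event of the common stream, $M$ and $M_{-j}$ differ by at most the single edge incident to the last swap vertex, which yields $|M_{-j}| \leq |M| \leq |M_{-j}| + 1$.
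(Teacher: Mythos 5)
Your proof is correct and follows the same basic strategy as the paper's: couple the two executions through the shared ranks and maintain, by induction over the event stream, that the sets of currently unmatched vertices in $G$ and $G_{-j}$ differ in at most one ``swap'' vertex. However, your treatment is more careful than the paper's on the one point where the fully online setting genuinely differs from Lemma~\ref{lem:ranking_minus_j}. The paper's proof (explicitly a sketch, deferring to Lemma~2.3 of \cite{HuangKTWZZ/STOC/2018}) carries over the \emph{one-sided} invariant verbatim, asserting $N^{(i)}(i') = N^{(i)}_{-j}(i')$ or $N^{(i)}(i') = N^{(i)}_{-j}(i') \cup \{j'\}$ throughout. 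You correctly observe that this one-sided form does not survive the departure of $j$ while it is still unmatched in $G$: in that event $j$ grabs some neighbor $j''$ in the run on $G$ only, after which $j''$ is matched in $G$ but still unmatched in $G_{-j}$, so the extra unmatched vertex migrates to the $G_{-j}$ side. (A path $j$--$a$--$b$ in which $j$ departs first already exhibits this: after $j$ matches $a$, the vertex $a$ is unmatched in $G_{-j}$ but matched in $G$.) Your symmetric invariant is the right fix, and your final accounting still yields $|M_{-j}| \leq |M| \leq |M_{-j}| + 1$ because each reversal of direction is paid for by the extra edge that $M$ acquires at that moment. The only detail you leave implicit is that the same reversal must also be checked when a later swap vertex $j' \neq j$ departs while unmatched in exactly one of the two runs, but as you note this is the identical calculation, and your symmetric invariant is closed under it. In short, your argument is not just equivalent to the paper's; it repairs an imprecision in the paper's stated claim.
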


\begin{proof}
    As in the proof of Lemma~\ref{lem:ranking_minus_j}, let $N^{(i)}(i')$ (or $N^{(i)}_{-j}(i')$) be the set of neighbors of $i'$ in $G$ (or $G_{-j}$) which is unmatched by the time that $i$ departs in the run of \textsc{Fully Online Ranking} with the fixed values of $x$.
    We claim that for all $i \in V$, there exists some $j' \in V$ such that for all $i' \in V$, we have $N^{(i)}(i') = N^{(i)}_{-j}(i')$ or $N^{(i)}(i') = N^{(i)}_{-j}(i') \cup \{j'\}$.
    
    This claim follows via an almost identical induction as in Lemma~\ref{lem:ranking_minus_j}.
    Then, since $i$ always has more unmatched neighbors in $G$ than in $G_{-j}$, we have $|M| \geq |M_{-j}|$.
    And if at some time in the algorithm, $i$ is matched to $j'$ in $G$ and not matched at all in $G_{-j}$, then we have that $N^{(i + 1)}(i') = N_{-j}^{(i + 1)}(i')$ for all $i' \in V$.
    Thus the two runs will the identical from that point onward and $|M| = |M_{-j}| + 1$.\qed
\end{proof}

Since Lemma~\ref{lem:ranking_minus_j_2} implies Lemma~\ref{lem:bounded_diffs_2}, this yields Theorem~\ref{thm:fom_concentration} for graphs which contain a perfect matching.
But as in Section~\ref{sec:online_bipartite_matching}, we may drop this condition by reducing a graph $G$ with a matching $M$ to the subgraph induced by the vertices covered by $M$.
Adding the vertices back in only increases the performance of \textsc{Fully Online Ranking} by Lemma~\ref{lem:ranking_minus_j_2}.

\section{Online Vertex-Weighted Bipartite Matching}\label{sec:weighted}

In this section we will consider a weighted extension of the Online Bipartite Matching Problem which has been inspired by online advertising markets.
In the Online Vertex-Weighted Bipartite Matching Problem, we have a bipartite graph $G = (S, B, E)$ with vertex weights $w : S \rightarrow \bbR_+$ on the offline vertices.
Here $S$ represents the advertisers and $B$ represents website impressions or search queries which should get matched to ads from the advertisers.
The vertices $B$ arrive online in adversarial order and should get matched to a neighbor $j$ such that the total weight of the matched vertices in $S$ is maximized.
This problem can be seen as a special case of the AdWords Problem which instead imposes edge-weights and budgets on the offline vertices.

Perhaps somewhat surprisingly it took 20 years for \textsc{Ranking} to be extended for the unweighted to the vertex-weighted setting by Aggarwal et al.\ \cite{AggarwalGKM/SODA/2011}.
This is because in the presence of weights, it is no longer enough to pick a uniformly random permutation over the offline vertices.
Instead, one has to skew the permutation so that heavier vertices are more likely to appear first.
This is done elegantly in Algorithm~\ref{alg:ranking_4} by ordering the vertices not by their $x_j$ but rather by the careful chosen quantity $w_j (1 - e^{x_j - 1})$.

\medskip
\begin{algorithm2e}[H]
    \For{$j \in S$}{
        Sample a uniformly random $x_j \in [0, 1]$.
    }
    \For{each buyer $i$ who arrives}{
        Match $i$ to an unmatched $j \in N(i)$ maximizing $w_{j} \left(1 - e^{x_{j} - 1}\right)$.\label{alg:ranking_4:max}
    }
    \caption{\textsc{Vertex-Weighted Ranking}\label{alg:ranking_4}}
\end{algorithm2e}
\medskip

However, Algorithm~\ref{alg:ranking_4} does not lend itself to a straight-forward analysis via the method of bounded differences.
This is because a vertex with small weight, which should have little impact on the total weight of the matching, can sometimes be chosen over a vertex with much larger weight.
See the example shown in Figure~\ref{fig:bad_concentration}.

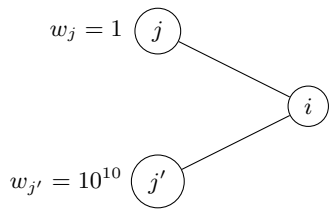
\begin{figure}[hbt]
    \centering
    \begin{tikzpicture}
        \node[circle, draw] (j) at (0, 0) {$j$};
        \node[circle, draw] (jp) at (0, -2) {$j'$};
        \node[circle, draw] (i) at (2, -1) {$i$};
        
        \draw[-] (j) -- (i);
        \draw[-] (jp) -- (i);
        
        \node[left, anchor=east, xshift=-10pt] at (j) {$w_j = 1$};
        \node[left, anchor=east, xshift=-10pt] at (jp) {$w_{j'} = 10^{10}$};
    \end{tikzpicture}
    \caption{Shown is a simple instance in which the value of $x_j$ can have a large impact on the final matching despite the fact that $w_j$ is small. If $x_{j'} \gg 1 - 10^{-10}$, $i$ will choose $j$ in line~\ref{alg:ranking_4:max} for sufficiently small values of $x_j$.}
    \label{fig:bad_concentration}
\end{figure}

In particular, the problem lies with the fact that $w_j (1 - e^{x_j - 1})$ can get arbitrarily close to 0 if $x_j$ gets close to 1.
We will overcome this problem by changing the function slightly.
For any $\epsilon > 0$ we consider $\epsilon$-\textsc{Ranking} as shown in Algorithm~\ref{alg:ranking_5}.

\medskip
\begin{algorithm2e}[H]
    \For{$j \in S$}{
        Sample a uniformly random $x_j \in [0, 1]$.
    }
    \For{each buyer $i$ who arrives}{
        Match $i$ to an unmatched $j \in N(i)$ maximizing $w_{j} \left(1 - e^{x_{j} - 1 - \epsilon}\right)$.\label{alg:ranking_5:max}
    }
    \caption{$\epsilon$-\textsc{Ranking}\label{alg:ranking_5}}
\end{algorithm2e}
\medskip

In the following fix some instance $G = (S, B, E)$ with vertex-weights $w$ and some $\epsilon > 0$.
Then we let $f : [0, 1]^S \rightarrow \bbR$ represent the total weight of the matching generated by Algorithm~\ref{alg:ranking_5} with fixed samples $x_j$.
We will show that $\epsilon$-\textsc{Ranking} is still $(1 - \frac{1}{e} - \epsilon)$-competitive while also allowing us to give a concentration bound.

To give a concise proof of the $(1 - \frac{1}{e} - \epsilon)$-competitiveness we will use the economic analysis by Eden et al.\ \cite{Eden/SOSA/2020} which is itself based on the primal-dual viewpoint due to Devanur et al.\ \cite{DevanurJK/SODA/2013}.
This analysis associates random variables $r_j$ with all $j \in S$ and $u_i$ with $i \in B$.
The idea is that the value $w_j e^{x_j - 1 - \epsilon}$ represents the \emph{price} of $j$ and whenever a match between $i$ and $j$ is made, this is a \emph{sale}.
We will then set $r_j$ (the \emph{revenue}) to be $w_j e^{x_j - 1 - \epsilon}$ and $u_i$ (the \emph{utility}) to be $w_j e^{x_j - 1 - \epsilon}$.
If a vertex is never matched, its revenue / utility will be zero.

\begin{lemma}\label{lem:ranking_minus_j_3}
    Using the notation from Lemma~\ref{lem:ranking_minus_j}, we have that for all $j \in S$ and fixed samples $x$,
    \[
        w(M_{-j}) - \frac{2}{\epsilon} w_j \leq w(M) \leq w(M_{-j}) + w_j.
    \]
    Additionally, for any $i \in B$, its utility $u_i$ in the run on $G$ will be no less than in the run on $G_{-j}$.
\end{lemma}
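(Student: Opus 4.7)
The plan is to mirror the proof of Lemma~\ref{lem:ranking_minus_j} while accounting for the weights. First I would establish, by essentially the same induction on buyer arrivals, that at any point during the two parallel runs there exists a single ``extra'' vertex $j' \in S$ such that the set of unmatched neighbors of each buyer in $G$ equals that in $G_{-j}$, possibly augmented by $j'$. The inductive step works here too because whenever the arriving buyer $i$ matches differently in the two runs, it must be picking the current extra $j_t$ in $G$ (otherwise both algorithms see the same unmatched neighborhood of $i$ and pick the same maximizer of $w_j(1 - e^{x_j - 1 - \epsilon})$); the new extra then becomes whichever vertex $j_{t+1}$ the run on $G_{-j}$ assigns $i$ to. This produces a \emph{displacement chain} $j_0 = j, j_1, \ldots, j_k$ and corresponding buyers $i_1, \ldots, i_k$, where $i_t$ is matched to $j_{t-1}$ in $G$ and to $j_t$ in $G_{-j}$. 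The chain either ends with a buyer $i_{k+1}$ matched to $j_k$ in $G$ but unmatched in $G_{-j}$, or simply terminates with $j_k$ left unmatched in $G$.

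Given this chain, $w(M) - w(M_{-j})$ is straightforward to compute: only chain buyers contribute, and telescoping yields $w(M) - w(M_{-j}) = w_j - w_{j_k}$ (or $w_j$ in the first termination case). The upper bound $w(M) \leq w(M_{-j}) + w_j$ is then immediate since $w_{j_k} \geq 0$. The utility-monotonicity claim also drops out of the same picture: for a chain buyer $i_{t+1}$ the algorithm's preference of $j_t$ over $j_{t+1}$ in $G$ is precisely the statement that its utility (the quantity being maximized) is at least as large in $G$ as in $G_{-j}$; non-chain buyers have identical matches in both runs, and a chain buyer unmatched in $G_{-j}$ trivially gains utility when matched in $G$.

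The main obstacle is the lower bound, which requires $w_{j_k} - w_j \leq (2/\epsilon) w_j$. To establish this I would use that at the moment $i_{t+1}$ arrives, both $j_t$ and $j_{t+1}$ are unmatched neighbors of $i_{t+1}$ in $G$, so the algorithm's choice of $j_t$ gives
\[
    \frac{w_{j_{t+1}}}{w_{j_t}} \leq \frac{1 - e^{x_{j_t} - 1 - \epsilon}}{1 - e^{x_{j_{t+1}} - 1 - \epsilon}}.
\]
Multiplying these inequalities over $t = 0, \ldots, k - 1$ causes the right-hand side to telescope to $\frac{1 - e^{x_{j_0} - 1 - \epsilon}}{1 - e^{x_{j_k} - 1 - \epsilon}} \leq \frac{1}{1 - e^{-\epsilon}}$, so $w_{j_k} \leq w_j / (1 - e^{-\epsilon})$. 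The proof then reduces to the elementary inequality $\frac{e^{-\epsilon}}{1 - e^{-\epsilon}} \leq \frac{2}{\epsilon}$ for all $\epsilon > 0$, which follows by noting that $g(\epsilon) = 2(1 - e^{-\epsilon}) - \epsilon e^{-\epsilon}$ satisfies $g(0) = 0$ and $g'(\epsilon) = e^{-\epsilon}(1 + \epsilon) > 0$.
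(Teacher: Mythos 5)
Your proof is correct and follows essentially the same route as the paper: the single-extra-vertex induction, the observation that the extra vertex's score $w_{j'}\left(1 - e^{x_{j'} - 1 - \epsilon}\right)$ only decreases along the displacement chain, and the resulting bound $w_{j_k} \leq w_j / (1 - e^{-\epsilon})$. Your telescoping-product formulation and the exact identity $w(M) - w(M_{-j}) = w_j - w_{j_k}$ are just a slightly more explicit (and marginally tighter, since your elementary inequality holds for all $\epsilon > 0$ rather than only small $\epsilon$) packaging of the paper's inductive invariant and its set-difference accounting of $T$ versus $T_{-j}$.
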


\begin{proof}
    For any buyers $i, i' \in B$, let $N^{(i)}(i')$ be the set of neighbors of $i'$ in $G$ which are unmatched by the time that $i$ arrives in the run of Algorithm~\ref{alg:ranking_5} with the fixed values of $x$.
    Likewise, let $N^{(i)}_{-j}(i')$ be the set of unmatched neighbors of $i'$ in the run of $\epsilon$-\textsc{Ranking} on $G_{-j}$ when $i$ arrives.
    We claim that for all $i \in B$ there exists some $j' \in S$ such that
    \[
        w_{j'} (1 - e^{x_{j'} - 1 - \epsilon}) \leq w_j (1 - e^{x_j - 1 - \epsilon})
    \]
    and for all $i' \in B$, we have $N^{(i)}(i') = N^{(i)}_{-j}(i')$ or $N^{(i)}(i') = N^{(i)}_{-j}(i') \cup \{j'\}$.
    
    This claim is almost the same as in the proof of Lemma~\ref{lem:ranking_minus_j} and may likewise be shown via induction.
    Note that the extra condition on $w_{j'}$ holds at the beginning where $j' = j$ and continues to hold throughout the induction because whenever $i$ matches to $j'$, it can only free up 
    This clearly holds at the beginning where $j' = j$ and whenever $i$ matches to $j'$, it frees up a vertex $j''$ with
    \[
        w_{j''} (1 - e^{x_{j''} - 1 - \epsilon}) \leq w_{j'} (1 - e^{x_{j'} - 1 - \epsilon})
    \]
    due to the fact that $j'$ was picked over $j''$ in line~\ref{alg:ranking_5:max}.
    If $i$ was not even matched in $G_{-j}$, we can simply set $j' = j$ for the induction.
    
    Now note that since $N^{(i)}_{-j}(i) \subseteq N^{(i)}(i)$ for all $i \in B$, we always maximize over a larger set in line~\ref{alg:ranking_5:max}.
    Thus the utility of $i$ will be no smaller in the run on $G$ compared to the run on $G_{-j}$.
    
    On the other hand, let $T \subseteq S$ be the set of sellers matched in the run on $G$ and let $T_{-j} \subseteq S \setminus \{j\}$ be the set of sellers matched in the run on $G_{-j}$.
    Then we observe that $T \setminus T_{-j} \subseteq \{j\}$ because for all $j' \neq j$, if $j'$ gets matched to $i$ in $M$, then either $j' \in N^{(i)}_{-j}(i)$ implying that $i$ will match to $j'$ in $M_{-j}$ , or $j'$ was already matched to some other vertex.
    In both cases, if $j' \in T$ then $j' \in T_{-j}$.
    This implies that $w(M) \leq w(M_{-j}) + w_j$.
    
    We also have that $|T_{-j} \setminus T| \leq 1$.
    Simply imagine a buyer $i^\star$ that arrives after all other buyers and has edges to all sellers.
    Then by the claim, there exists some $j' \in S$ such that
    \[
        (S \setminus \{j\}) \setminus T_{-j} = N^{(i^\star)}_{-j}(i^\star) \subseteq N^{(i^\star)}(i^\star) \cup \{j'\} = S \setminus (T \cup \{j'\})
    \]
    and so $T_{-j} \subseteq T \cup \{j'\}$.
    This implies that $w(M) \geq w(M_{-j}) - w_{j'}$.
    
    Finally, we also know by the claim that $w_{j'} (1 - e^{x_{j'} - 1 - \epsilon}) \leq w_j (1 - e^{x_j - 1 - \epsilon})$ which implies
    \[
        w_{j'} \leq \frac{1}{1 - e^{-\epsilon}} w_j \leq \frac{1}{\left(1 - \frac{1}{e}\right) \epsilon} w_j \leq \frac{2}{\epsilon} w_j.
    \]
    Thus we have shown $w(M_{-j}) - \frac{2}{\epsilon} w_j \leq w(M) \leq w(M_{-j}) + w_j$ as required.\qed
\end{proof}

\begin{lemma}[Bounded Differences]\label{lem:bounded_diffs_3}
    Let $x \in [0, 1]^S$, $j^\star \in S$ and $\theta \in [0, 1]$ be arbitrary.
    Define $x'_j$ to be $\theta$ if $j = j^\star$ and $x_j$ otherwise.
    Then $|f(x) - f(x')| \leq \left(1 + \frac{2}{\epsilon}\right) w_{j^\star}$.
\end{lemma}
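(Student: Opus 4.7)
The plan is to mirror the proof of Lemma~\ref{lem:bounded_diffs}, replacing the unit bound obtained from Lemma~\ref{lem:ranking_minus_j} with the two-sided quantitative bound obtained from Lemma~\ref{lem:ranking_minus_j_3}. The vectors $x$ and $x'$ agree on every coordinate except $j^\star$, so if we consider the subgraph $G_{-j^\star}$ and run $\epsilon$-\textsc{Ranking} on it, the resulting matching $M_{-j^\star}$ is the same under both $x$ and $x'$: the algorithm never reads the coordinate associated with $j^\star$, and it reads all other coordinates from the part of the input where $x$ and $x'$ coincide. In particular, $w(M_{-j^\star})$ is a single well-defined quantity.

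Next I would apply Lemma~\ref{lem:ranking_minus_j_3} with $j = j^\star$ twice, once to the run of $\epsilon$-\textsc{Ranking} on $G$ with input $x$ and once with input $x'$. This yields
\[
    w(M_{-j^\star}) - \tfrac{2}{\epsilon} w_{j^\star} \;\leq\; f(x) \;\leq\; w(M_{-j^\star}) + w_{j^\star}
\]
and the identical two-sided bound for $f(x')$. Subtracting, the common middle term $w(M_{-j^\star})$ cancels and the extremes give
\[
    |f(x) - f(x')| \;\leq\; w_{j^\star} + \tfrac{2}{\epsilon} w_{j^\star} \;=\; \left(1 + \tfrac{2}{\epsilon}\right) w_{j^\star},
\]
which is the claimed bound.

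There is no real obstacle here: all of the structural work has been absorbed into Lemma~\ref{lem:ranking_minus_j_3}, and the present statement is a two-line corollary of that lemma applied to the two inputs $x$ and $x'$ that agree off $j^\star$. The only thing worth emphasizing in the write-up is that $M_{-j^\star}$ is genuinely the \emph{same} matching under both inputs, since this is what allows the $w(M_{-j^\star})$ terms to cancel when the two bounds are subtracted.
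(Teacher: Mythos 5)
Your proposal is correct and is essentially identical to the paper's own proof: remove $j^\star$, observe that $M_{-j^\star}$ is the same matching under $x$ and $x'$, apply Lemma~\ref{lem:ranking_minus_j_3} to sandwich both $f(x)$ and $f(x')$ in the interval $[w(M_{-j^\star}) - \frac{2}{\epsilon} w_{j^\star},\, w(M_{-j^\star}) + w_{j^\star}]$, and conclude. Your explicit remark that the algorithm on $G_{-j^\star}$ never reads the coordinate $x_{j^\star}$ is a point the paper also makes (in its proof of Lemma~\ref{lem:bounded_diffs}) and is the right thing to emphasize.
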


\begin{proof}
    As in the proof of Lemma~\ref{lem:bounded_diffs}, we can simply remove $j^\star$ and apply Lemma~\ref{lem:ranking_minus_j_3}.
    Then
    \begin{align*}
        w(M_{-j^\star}) - \frac{2}{\epsilon} w_{j^\star} &\leq f(x) \leq w(M_{-j^\star}) + w_{j^\star}, \\
        w(M_{-j^\star}) - \frac{2}{\epsilon} w_{j^\star} &\leq f(x') \leq w(M_{-j^\star}) + w_{j^\star}
    \end{align*}
    which implies the result.\qed
\end{proof}

\begin{lemma}\label{lem:dual_feasible}
    For any $\{i, j\} \in E$, we have $\bbE[r_j + u_i] \geq (1 - \frac{1}{e} - \epsilon) w_j$.
\end{lemma}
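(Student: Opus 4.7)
The plan is to carry out an economic / primal-dual argument in the style of Eden et al., but to take as threshold the utility $i$ would collect in $G_{-j}$ rather than the usual ``first $x_j$ at which $j$ fails to be matched'' threshold. Fix an edge $\{i,j\} \in E$, condition on the ranks $x_{-j}$ of all sellers other than $j$, and let $\beta$ denote the utility $i$ obtains in the run of $\epsilon$-\textsc{Ranking} on $G_{-j}$ with these ranks (with $\beta = 0$ if $i$ is unmatched there). Writing $\phi(y) = 1 - e^{y - 1 - \epsilon}$, define $\tilde\theta \in [0,1]$ to be the value with $w_j \phi(\tilde\theta) = \beta$, clipped to $0$ when $\beta \geq w_j \phi(0)$ and to $1$ when $\beta \leq w_j \phi(1)$.

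The crux is to argue that for every $x_j \in [0, \tilde\theta)$ the seller $j$ is matched in the run on $G$. Apply the structural claim of Lemma~\ref{lem:ranking_minus_j_3}: when buyer $i$ arrives in the $G$-run, its unmatched-neighbor set is either $N^{(i)}_{-j}(i)$ or $N^{(i)}_{-j}(i) \cup \{j'\}$ for some $j'$ with $w_{j'} \phi(x_{j'}) \leq w_j \phi(x_j)$. If the extra seller happens to be $j$ itself, then $w_j \phi(x_j) > \beta$ strictly beats every alternative in $N^{(i)}_{-j}(i)$ (whose maximum is exactly $\beta$ by definition of $\beta$), so $i$ matches to $j$; otherwise $j \notin N^{(i)}(i)$ and, since $\{i,j\} \in E$, $j$ must have been matched by an earlier buyer. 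Either way, $j$ contributes revenue $r_j = w_j e^{x_j - 1 - \epsilon}$. Combined with the last clause of Lemma~\ref{lem:ranking_minus_j_3}, which gives $u_i \geq \beta$ regardless of the value of $x_j$, this yields
\[
    \bbE_{x_j}[r_j \mid x_{-j}] \geq w_j\bigl(e^{\tilde\theta - 1 - \epsilon} - e^{-1-\epsilon}\bigr), \qquad \bbE_{x_j}[u_i \mid x_{-j}] \geq \beta.
\]

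When $\tilde\theta \in (0,1)$, the defining identity $\beta = w_j(1 - e^{\tilde\theta - 1 - \epsilon})$ causes the two $e^{\tilde\theta - 1 - \epsilon}$ terms to telescope, leaving $w_j(1 - e^{-1-\epsilon}) \geq w_j(1 - 1/e) \geq (1 - 1/e - \epsilon) w_j$. The corner $\tilde\theta = 0$ follows immediately from $\beta \geq w_j(1 - e^{-1-\epsilon})$, and the corner $\tilde\theta = 1$ from the elementary inequality $e^{-\epsilon}(1 - 1/e) \geq (1-\epsilon)(1 - 1/e) \geq 1 - 1/e - \epsilon$ applied to $\bbE[r_j \mid x_{-j}]$ alone. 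Taking expectation over $x_{-j}$ finishes the proof.

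I expect the main obstacle to be the case analysis certifying that $j$ is matched whenever $x_j < \tilde\theta$: it rests on the delicate structural property of Lemma~\ref{lem:ranking_minus_j_3} that the two runs' unmatched neighborhoods differ by at most one ``dominated'' seller, and one has to identify $\beta$ as exactly the maximum score in $N^{(i)}_{-j}(i)$ so that the threshold $\tilde\theta$ aligns with the revenue integral. The remaining computations are routine once this structural step is in place.
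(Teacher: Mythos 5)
Your proposal is correct and follows essentially the same route as the paper's proof: condition on $x_{-j}$, let the utility of $i$ in the run on $G_{-j}$ define a threshold rank below which $j$ is guaranteed to be matched (using the monotonicity of $u_i$ and the structural claim from Lemma~\ref{lem:ranking_minus_j_3}), and then integrate the revenue up to that threshold so that the boundary terms telescope against the utility. Your treatment of the ``$j$ gets matched'' step is a little more explicit than the paper's, and you split the clipped corner cases into three rather than two, but the argument is the same.
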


\begin{proof}
    Fix all samples $x$ except for $x_j$.
    Then we can define $u^*$ to be the utility of $i$ when $\epsilon$-\textsc{Ranking} is ran on $G_{-j}$.
    By Lemma~\ref{lem:ranking_minus_j_3}, we know that $u_i \geq u^*$, regardless of the value of $x_j$.
    
    On the other hand, if $x_j$ is small enough that $w_j (1 - e^{x_j - 1 - \epsilon}) > u^*$, then $j$ will definitely get matched because if $j$ is not yet matched by the time that $i$ arrives, then clearly $j$ will be chosen in line~\ref{alg:ranking_5:max} of the algorithm and so it gets matched to $i$.
    Now if $u^*$ is very small, this may be the case for all values of $x_j$ and in that case
    \[
        \bbE[r_j \mid x_{-j}] \geq \int_0^{1} w_j e^{t - 1 - \epsilon}\, \mathrm{d} t = \left(1 - \frac{1}{e}\right) e^{-\epsilon} w_j \geq \left(1 - \frac{1}{e} - \epsilon\right) w_j.
    \]
    Otherwise there will be some value $z \in [0, 1]$ such that $w_j (1 - e^{z - 1 - \epsilon}) = u^*$ and then we can compute
    \[
        \bbE[r_j \mid x_{-j}] \geq \int_0^{z} w_j e^{t - 1 - \epsilon}\, \mathrm{d} t = \left(1 - \frac{1}{e}\right) w_j - u^*.
    \]
    But clearly, in both cases we have
    \[
        \bbE[r_j + u_i \mid x_{-j}] \geq \bbE[r_j \mid x_{-j}] + u^* \geq \left(1 - \frac{1}{e} - \epsilon\right) w_j
    \]
    and so in particular $\bbE[r_j + u_i] \geq (1 - \frac{1}{e} - \epsilon) w_j$ as claimed.\qed
\end{proof}

\begin{lemma}\label{lem:ranking_competitive}
    $\epsilon$-\textsc{Ranking} is $(1 - \frac{1}{e} - \epsilon)$-competitive.
\end{lemma}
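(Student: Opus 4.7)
The plan is to use the standard primal-dual / economic argument that Eden et al.\ employ for the original \textsc{Vertex-Weighted Ranking}, with Lemma~\ref{lem:dual_feasible} substituting for the corresponding dual-feasibility lemma in their analysis. The key identity is that for every matched pair $\{i, j\}$ produced by $\epsilon$-\textsc{Ranking}, by the definitions of the revenue and utility we have
\[
    r_j + u_i = w_j e^{x_j - 1 - \epsilon} + w_j (1 - e^{x_j - 1 - \epsilon}) = w_j,
\]
and unmatched vertices contribute $r_j = 0$ or $u_i = 0$. Summing over all vertices therefore yields
\[
    w(M) = \sum_{j \in S} r_j + \sum_{i \in B} u_i.
\]

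Next, let $M^*$ be a maximum-weight matching in $G$. Since each $j \in S$ (respectively each $i \in B$) appears in at most one edge of $M^*$, we get the inequality
\[
    \sum_{\{i, j\} \in M^*} (r_j + u_i) \leq \sum_{j \in S} r_j + \sum_{i \in B} u_i = w(M).
\]
Taking expectations and applying Lemma~\ref{lem:dual_feasible} to each edge of $M^*$ gives
\[
    \bbE[w(M)] \geq \sum_{\{i, j\} \in M^*} \bbE[r_j + u_i] \geq \left(1 - \frac{1}{e} - \epsilon\right) \sum_{\{i, j\} \in M^*} w_j = \left(1 - \frac{1}{e} - \epsilon\right) w(M^*),
\]
which is exactly the claim.

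There is no real obstacle here, since all the work has been done in Lemma~\ref{lem:dual_feasible}: that lemma is where the monotonicity half of Lemma~\ref{lem:ranking_minus_j_3} (namely $u_i \geq u^*$ irrespective of $x_j$) is combined with the gain-sharing integral over $x_j$ to produce the $(1 - \tfrac{1}{e}) e^{-\epsilon} \geq 1 - \tfrac{1}{e} - \epsilon$ factor. The remaining argument is the routine observation that the sum of $r_j + u_i$ over any matching is dominated by $w(M)$, which follows immediately from the fact that revenue is assigned per seller and utility per buyer, each vertex being counted at most once in $M^*$.
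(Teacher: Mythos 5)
Your proof is correct and takes essentially the same route as the paper's: both establish $w(M) = \sum_{j \in S} r_j + \sum_{i \in B} u_i$ via the per-edge identity $r_j + u_i = w_j$, drop the nonnegative terms not covered by $M^*$, and apply Lemma~\ref{lem:dual_feasible} to each edge of $M^*$. Your write-up additionally makes explicit that $u_i = w_j\left(1 - e^{x_j - 1 - \epsilon}\right)$ for a matched pair, which is the intended reading of the paper's definition (the text appears to contain a typo setting both $r_j$ and $u_i$ to the price $w_j e^{x_j - 1 - \epsilon}$).
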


\begin{proof}
    Let $M^*$ be a maximum weight matching and let $M$ be the matching output by $\epsilon$-\textsc{Ranking}.
    Notice that every time we match an edge in the algorithm, we increase $\sum_{j \in S} r_j + \sum_{i \in B} u_i$ by exactly the weight of the edge.
    Thus by Lemma~\ref{lem:dual_feasible},
    \begin{align*}
        \bbE[w(M)] &= \bbE\left[\sum_{j \in S} r_j + \sum_{i \in B} u_i\right] \geq \sum_{\{i, j\} \in M^*} \bbE[r_j + u_i] \\
            &\geq \sum_{\{i, j\} \in M^*} \left(1 - \frac{1}{e} - \epsilon\right) w_j = \left(1 - \frac{1}{e} - \epsilon\right) w(M^*)
    \end{align*}
    and therefore $\epsilon$-\textsc{Ranking} is $(1 - \frac{1}{e} - \epsilon)$-competitive.\qed
\end{proof}

Finally, we have the tools necessary to show Theorem~\ref{thm:ovwbm_concentration} by combining Lemma~\ref{lem:bounded_diffs_3} with Lemma~\ref{lem:ranking_competitive}.

\begin{proof*}{Proof of Theorem~\ref{thm:ovwbm_concentration}}
    Given some $\alpha > 0$, we consider the algorithm $\frac{\alpha}{2}$-\textsc{Ranking} which we know to be $(1 - \frac{1}{e} - \frac{\alpha}{2})$-competitive by Lemma~\ref{lem:ranking_competitive}.
    We apply Lemma~\ref{lem:mcdiarmid} (McDiarmid's inequality) with Lemma~\ref{lem:bounded_diffs_3} (bounded differences).
    This gives us
    \begin{align*}
        \bbP\left[w(M) < \left(1 - \frac{1}{e} - \alpha\right) w(M^*)\right] &< e^{-2 \frac{\alpha^2}{2} \frac{w(M^*)^2}{(1 + 4 / \alpha)^2 ||w||_2^2}} \\
            &\leq e^{-\frac{\alpha^4}{50} \frac{w(M^*)^2}{||w||_2^2}}
    \end{align*}
    where we use that $\alpha < 1$ since otherwise the bound holds trivially. \qed
\end{proof*}

The results of this section may also be extended to a generalization of the Online Vertex-Weighted Bipartite Matching Problem which is called the Online Single-Valued Bipartite Matching Problem.
The setup is almost identical in that we still have a bipartite graph $G = (S, B, E)$ with vertex weights $w : S \rightarrow \bbR_+$ on the offline vertices.
However, now each offline vertex $j$ also has a capacity $c_j \in \bbN$ that represents how often it is allowed to be matched.

Clearly, Theorem~\ref{thm:ovwbm_concentration} can be extended to this setting by simply creating $c_j$ many copies of each offline vertex $j$.
This can be done implicitly and in a capacity-oblivious way by simply sampling a new $x_j$ every time $j$ is matched during the \textsc{Ranking} (or $\epsilon$-\textsc{Ranking}) algorithm.

Recently, Vazirani \cite{Vazirani/arXiv/2021} showed that this ``resampling'' is in fact not necessary, i.e.\ that the same value of $x_j$ can be used for every copy of $j$ while still achieving $(1 - \frac{1}{e})$-competitiveness of \textsc{Ranking}; see Algorithm~\ref{alg:ranking_6}.

\medskip
\begin{algorithm2e}[H]
    \For{$j \in S$}{
        Sample a uniformly random $x_j \in [0, 1]$.
    }
    \For{each buyer $i$ who arrives}{
        Match $i$ to a $j \in N(i)$ which has been matched less than $c_j$ times, maximizing $w_{j} \left(1 - e^{x_{j} - 1}\right)$.\label{alg:ranking_6:max}
    }
    \caption{\textsc{Single-Valued Ranking}\label{alg:ranking_6}}
\end{algorithm2e}
\medskip

The main benefit of Algorithm~\ref{alg:ranking_6} is that it uses fewer random bits than running \textsc{Ranking} on the reduced instance with $c_j$ many copies of each offline vertex $j$.
However, it will accordingly be less tightly concentrated which leads to a version of Theorem~\ref{thm:ovwbm_concentration} in which the bound depends not on $||w||_2^2$ but rather on $\sum_{j} (c_j w_j)^2$.
Thus we will not pursue this approach here.

\section*{Acknowledgements}

We would like to thank Vijay Vazirani for helpful comments and feedback.

\bibliographystyle{splncs04}
\bibliography{references}

\end{document}